\newtheorem{theorem}{Theorem}
\newtheorem{lemma}[theorem]{Lemma}
\newtheorem{corollary}[theorem]{Corollary}
\newcommand{\var}{\mathrm{var}}
\newcommand{\ex}[1]{E\left[ #1 \right]}
\newcommand{\exi}[1]{E_{i}\left[ #1 \right]}
\title{On the maximal L1 influence of \\ real-valued boolean functions}
\author{Andrew J. Young and Henry D. Pfister}
\date{}
\begin{document}

\maketitle

\begin{abstract}
We show that any sequence of well-behaved (e.g. bounded and non-constant) real-valued functions of $n$ boolean variables $\{f_{n}\}$ admits a sequence of coordinates whose $L^1$ influence under the $p$-biased distribution, for any $p\in(0,1)$, is $\Omega(\var(f_n) \frac{\ln n}{n})$.
\end{abstract}

\section{Introduction}

The celebrated KKL result of Kahn, Kalai and Linial~\cite{KKL1988} shows that any boolean-valued function of $n$ boolean variables has a variable whose influence is $\Omega \big( \var(f) \frac{\ln(n)}{n} \big)$ which is a factor $\ln(n)$ larger than predicted.
The definition of influence in this result is the classic one and has many equivalent formulations.
The definition we use is introduced in the next paragraph and discussed in Section~\ref{sec:background}.
Some related work \cite{Bourgain1992,Hatami2009,Keller2011,Cordero-Erausquin2012,Keller2012,ODonnell2013} has sought to generalize the class of applicable functions, in particular the domain, using a variety of definitions for influence, e.g. geometric.

We provide a similar result for the $L^1$ influence of real-valued boolean functions (Theorem \ref{theorem_real_kkl}), where the \emph{$L^p$ influence} of variable $i$ equals the $L^p$ norm to the $p$-th power of the difference between the function and its average over the $i$-th coordinate.
This has implications for sharp thresholds (e.g. see~\cite{Talagrand1994,Friedgut1996}) that follow from Rossignol's generalization of a lemma~\cite{Rossignol2006} due to Margulis and Russo~\cite{Margulis1974,Russo1982}. 
The proof is based on a variation of the hypercontractivity theorem for $p$-biased measures studied by Talagrand \cite{Talagrand1994} (Corollary \ref{corollary_p-hypercontractivity}).

More recently, Kelman et al. \cite{Kelman2021} analyze $L^1$ influences to provide variations of several well known theorems.
Their results hold for bounded functions and $p=1/2$.
In contrast, our results apply to a wider class of functions and any $0<p<1$.
Moreover, our proof technique is quite different.

The final result is of an auxiliary nature and, as such, some details are omitted.
It concerns a converse, up to a small multiplicative factor, to Theorem \ref{theorem_real_kkl} achieved by a well known sequence of boolean functions, the tribes functions of Ben-Or and Linial \cite{Ben-Or1990}.
While this result is well known, we include if for completeness and to demonstrate the tightness of the constant in Theorem \ref{theorem_real_kkl}.

\section{Main result}

We state our main Theorem and some direct consequences.
Some standard definitions are deferred to the subsequent sections.

\begin{theorem}
\label{theorem_real_kkl}
Let $\mu$ be the $p$-biased measure, $f_{n} : \{-1,1\}^{n} \rightarrow \mathbb{R}$ and $f_{n}^{(i)} = f_{n} - \exi{f_{n}}$.
If $\var(f_{n})$ is strictly positive and $o(n^{\varepsilon})$, for all $\varepsilon > 0$, then 
\begin{equation*}
\liminf_{n \rightarrow \infty} \frac{ \max_{i} \lVert f_{n}^{(i)} \rVert_{1} }{ \var(f_{n}) \frac{\ln n}{n} } \ge \frac{C_{0}}{M_{0}},
 \end{equation*}
where
\begin{equation*}
M_{0} = \limsup_{n \rightarrow \infty} \max_{i : f_{n}^{(i)} \ne 0} \frac{ \lVert f_{n}^{(i)} \rVert_{2}^{2}}{ \lVert f_{n}^{(i)} \rVert_{1}}
\qquad
C_{0} = \sup_{ \alpha > 0} \frac{\tanh\left( \frac{\alpha}{2} \right) }{ \alpha - \ln \rho_{2}(\alpha)^{2} },
\end{equation*}
$\rho_{2}(\alpha) = \rho \left( e^{\alpha} + 1 \right)$ and $\rho$ is any of the smoothing parameters in Theorem \ref{theorem_p-hypercontractivity}.
\end{theorem}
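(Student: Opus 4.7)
The plan is to adapt the Kahn-Kalai-Linial template to the $L^1$-influence setting under the $p$-biased measure, with Corollary \ref{corollary_p-hypercontractivity} playing the role of Bonami-Beckner. The argument is by contradiction: assume $\tau_n := \max_i \|f_n^{(i)}\|_1 \le (C_0/M_0 - \varepsilon)\var(f_n)\ln n/n$ for all $n$ along some subsequence, and derive an inequality that eventually fails. Three ingredients are combined: (i) hypercontractivity of the noise operator $T_\rho$ applied to each $f_n^{(i)}$; (ii) log-convexity of $L^p$-norms to interpolate between $\|f_n^{(i)}\|_1$ and $\|f_n^{(i)}\|_2$, using the definition of $M_0$; and (iii) a Parseval identity together with a Markov-type truncation at a logarithmic Fourier level.

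\textbf{Main steps.} Fix $\alpha>0$, set $\rho=\rho_2(\alpha)$, and let $q=q(\alpha)\in(1,2)$ be the associated hypercontractive exponent. First, for each $i$, Corollary \ref{corollary_p-hypercontractivity} gives $\|T_\rho f_n^{(i)}\|_2 \le \|f_n^{(i)}\|_q$. Log-convexity with $\theta=2/q-1$ together with $\|f_n^{(i)}\|_2^2\le (M_0+o(1))\|f_n^{(i)}\|_1$ then yield
\[
\|T_\rho f_n^{(i)}\|_2^2 \;\le\; (M_0+o(1))^{1-\theta}\,\|f_n^{(i)}\|_1^{1+\theta} \;\le\; (M_0+o(1))^{1-\theta}\,\tau_n^{\theta}\,\|f_n^{(i)}\|_1.
\]
Summing over $i$ and invoking the Parseval identity $\sum_i \|T_\rho f_n^{(i)}\|_2^2 = \sum_{S\ne\emptyset}|S|\rho^{2|S|}\hat{f}_n(S)^2$ produces one side of the comparison. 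For the other side, truncate at Fourier level $k$ and use the Markov bound $\sum_{|S|>k}\hat{f}_n(S)^2 \le k^{-1}\sum_i \|f_n^{(i)}\|_2^2 \le k^{-1} n (M_0+o(1))\tau_n$; choosing $k \approx 2(M_0+o(1))\tau_n n/\var(f_n)$, which is $\Theta(\ln n)$ under the contradiction hypothesis, ensures that at least half of $\var(f_n)$ sits on frequencies of size $\le k$, hence
\[
\sum_{S\ne\emptyset}|S|\rho^{2|S|}\hat{f}_n(S)^2 \;\ge\; \rho^{2k}\cdot\tfrac{1}{2}\var(f_n).
\]
Chaining the two bounds, cancelling $\var(f_n)$ (permissible because $\var(f_n)=o(n^\varepsilon)$ makes any $\var(f_n)^\theta$ factor subpolynomial in $n$), and letting $n\to\infty$ along the subsequence forces an inequality $K\ge F(\alpha)/M_0$ on the purported constant $K$; optimizing $\alpha$ recovers the $\sup_\alpha$ in $C_0$.

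\textbf{Main obstacle.} The delicate part is not the general KKL-style contradiction but matching the explicit constant in the final inequality with the stated form $\sup_\alpha \tanh(\alpha/2)/(\alpha-\ln\rho_2(\alpha)^2)$. The denominator $\alpha-\ln\rho_2(\alpha)^2$ records the sum of the tail-exponent contributed by $\rho^{2k}$ and an ``$\alpha$'' term coming from the precise form of $q(\alpha)$ and the interpolation weight $\theta$ in Talagrand's $p$-biased hypercontractivity; similarly $\tanh(\alpha/2)$ should emerge from rewriting $\theta=2/q-1=(2-q)/q$ in terms of the natural parameter $\alpha=\ln((1+\tanh(\alpha/2))/(1-\tanh(\alpha/2)))$. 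Tight bookkeeping through the log-convexity step, and a careful optimization of the truncation level $k$ against the $\tau_n^\theta$ factor, is what produces the $\tanh$ rather than a looser constant. A more routine nuisance is absorbing the various subpolynomial factors $(\ln n)^{O(1)}$, $\var(f_n)^\theta$, and the $o(1)$ slack in $M_0$ into the $\liminf$, for which the hypothesis $\var(f_n)=o(n^\varepsilon)$ is exactly what is required.
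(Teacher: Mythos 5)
Your outline follows the paper's own route essentially step for step: argue by contradiction along an arbitrary subsequence, apply Corollary \ref{corollary_p-hypercontractivity} to each $f_n^{(i)}$, interpolate $\lVert f_n^{(i)}\rVert_q^2 \le M^{1-\theta}\lVert f_n^{(i)}\rVert_1^{1+\theta}$ with $\theta = 2/q-1 = \tanh(\alpha/2)$ for $q=1+e^{-\alpha}$ (the paper does this via Jensen, which is the same bound), sum over $i$ through the Parseval identity $\sum_i\lVert\cdot\rVert_2^2=\sum_S|S|\hat f(S)^2$, truncate the spectrum at a $\Theta(\ln n)$ level by Markov, and absorb the subpolynomial factors using $\var(f_n)=o(n^{\varepsilon})$. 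One notational slip: the noise rate fed into the corollary must be $\rho_2(\alpha)e^{-\alpha/2}$, not $\rho_2(\alpha)$, so that the tail penalty is $e^{-(\alpha-\ln\rho_2(\alpha)^2)k}$; your final bookkeeping indicates you intend this.

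The substantive problem is your truncation level. Choosing $k\approx 2(M_0+o(1))\tau_n n/\var(f_n)$ so that \emph{half} the variance survives gives, under the hypothesis $\tau_n\le K\var(f_n)\tfrac{\ln n}{n}$, the penalty $e^{-(\alpha-\ln\rho_2(\alpha)^2)k}\approx n^{-2(\alpha-\ln\rho_2(\alpha)^2)M_0K}$; chaining your two bounds then forces a contradiction only when $K<\tanh(\alpha/2)/\bigl(2M_0(\alpha-\ln\rho_2(\alpha)^2)\bigr)$, i.e. your sketch proves the statement with constant $C_0/(2M_0)$, half of what is claimed. Since the precise constant is the content of the theorem (it recovers the sharp $1/2$ for boolean functions at $p=1/2$ and is matched by tribes), this is not mere bookkeeping. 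The repair is exactly what the paper does: truncate at $k\approx M_0\tau_n n/\bigl((1-\varepsilon)\var(f_n)\bigr)$ (the paper's $a_nb$), so that only an $\varepsilon$-fraction of the variance is retained rather than half; a fixed positive fraction $\varepsilon\var(f_n)$ still dominates the leftover $n^{-\varepsilon'}$ decay, and sending $\varepsilon\downarrow 0$ after $n\to\infty$ (via Lemma \ref{lemma_limsup_properties}) recovers the full $\sup_{\alpha}\tanh(\alpha/2)/(\alpha-\ln\rho_2(\alpha)^2)$. You do flag that the truncation level needs careful optimization, but as written the hard-coded factor $2$ forecloses the stated constant.
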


If $M_{0}$ happens to be $0$, then the constant $C_{0} / M_0$ is interpreted as infinity and the RHS of Theorem \ref{theorem_real_kkl} can be taken to be any desired nonnegative constant.

For $p=1/2$, it is known that the constant in the original KKL theorem can be improved to $1/2$ (see e.g. \cite{odonnell14} Exercise 9.30).
This is a direct corollary because the $L^1$ and $L^2$ influences coincide for boolean functions giving $M_{0} = 1$ and $C_{0} = \frac{1}{2}$ when $p=\frac{1}{2}$.
Moreover, by letting $\alpha = 1$ and applying H\"{o}lder's inequality, we see that one implication of Theorem \ref{theorem_real_kkl} is
\begin{equation}
\label{equation_real_kkl_bounded}
\liminf_{n \rightarrow \infty} \frac{ \max_{i} \lVert f_{n}^{(i)} \rVert_{1} }{ \var(f_{n}) \frac{\ln n}{n} }
\ge \frac{9}{20} \frac{1}{\sup_{n} \lVert f_{n} \rVert_{\infty}} \frac{1}{1 + \left \lvert \ln \frac{p}{1-p} \right \rvert},
\end{equation}
where $\rho$ has the form of item $(iii)$ in Theorem \ref{theorem_p-hypercontractivity}.

\section{Fourier analysis on the $p$-biased hypercube}
\label{sec:background}

The domain of most functions is the Cartesian product of $\{-1,1\}$, and, typically, we assign $-1$ weight $1-p$ and $1$ weight $p$.
Such functions whose range is the real numbers will be referred to as \emph{real-valued boolean functions}.
For any $n$, the functions $\tau_{i}^{\pm} : \{-1,1\}^{n} \rightarrow \{-1,1\}^{n}$ fix the $i$-th coordinate to be $1$ for $+$ and $-1$ for $-$ and operate as the identity on all remaining coordinates.
Given a measure $\mu$, we use $\int \cdot \, d \mu$ and $\ex{\cdot}$ interchangeably.
Moreover, $\exi{\cdot}$ is integration over only the $i$-th coordinate, the $L^p$ norms are defined in the usual way (i.e. $\lVert f \rVert_{q}^{q} = \int \lvert f \rvert^{q} \, d \mu$) and the $L^q$ influence of the $i$-th coordinate is $\lVert f - \exi{f} \rVert_{q}^{q}$.

Let $\mu_{i}$ be the measure on $\{-1,1\}$ with $\exi{x_{i}} = 2 p_{i} - 1$ for some $0 < p_{i} < 1$ and $\mu = \mu_{1} \otimes \cdots \otimes \mu_{n}$.
The canonical orthonormal basis is
\begin{equation*}
\chi_{S} = \prod_{i \in S} \chi_{i},
\end{equation*}
where
\begin{equation*}
\chi_{i}(x) = \frac{1}{\sigma_{i}} \left( x_{i} - \exi{x_{i}} \right)
\end{equation*}
and
\begin{equation*}
\sigma_{i}^{2} = \exi{ \left( x_{i} - \exi{x_{i}} \right)^{2} }
\end{equation*}
is the variance.
More explicitly,
\begin{equation*}
\chi_{i}(x)
=
\begin{cases}
-\sqrt{\frac{p_{i}}{1-p_{i}}} & \text{if } x_{i} = -1 ,\\
\sqrt{\frac{1-p_{i}}{p_{i}}} & \text{if } x_{i} = 1.
\end{cases}
\end{equation*}

Every real-valued boolean function has a Fourier expansion
\begin{equation*}
f = \sum_{S \subseteq [n] } \hat{f}(S) \chi_{S},
\end{equation*}
where $[n]$ is $\{1,2, \ldots, n \}$ and
\begin{equation*}
\hat{f}(S) = \left< f , \chi_{S} \right>
= \sum_{x \in \{-1,1\}^{n}} f(x) \chi_{S}(x) \mu(x)
\end{equation*}
are the Fourier coefficients.
By Parseval's theorem the variance of such a function is the sum of all its squared Fourier coefficients not indexed by the empty set.
The smoothing operator $T_{\delta}$ acts on real-valued boolean functions as
\begin{equation*}
T_{\delta} f := \sum_{S \subseteq [n]} \delta^{\lvert S \rvert} \hat{f}(S) \chi_{S}.
\end{equation*}

If $p_{i} = p$ for all $i$, then $\mu$ is called the \emph{$p$-biased measure}.

The following Theorem and its immediate Corollary provide the necessary $p$-biased variation of the hypercontractivity theorem for our purposes.

\begin{theorem}
\label{theorem_p-hypercontractivity}
Let $\mu$ be the $p$-biased measure and $f : \{-1,1\}^{n} \rightarrow \mathbb{R}$, then for all $q \ge 2$
\begin{equation*}
\lVert T_{\gamma} f \rVert_{q} \le \lVert f \rVert_{2},
\end{equation*}
where $\gamma = \frac{1}{\sqrt{q-1}} \rho(q)$ and $\rho(q)$ is any one of the following:
\begin{enumerate}[i)]
\item
\begin{equation*}
\lambda^{\frac{1}{2}-\frac{1}{q}};
\end{equation*}
\item
\begin{equation*}
\sqrt{q-1} \sqrt{ \frac{ \sinh\left( - \frac{1}{q} \ln \left( \frac{ \lambda}{1- \lambda} \right) \right) }{ \sinh\left( - \left( 1 - \frac{1}{q} \right) \ln \left( \frac{ \lambda}{1- \lambda} \right) \right) } };
\end{equation*}
\item
\begin{equation*}
\sqrt{\frac{\lambda}{1-\lambda}},
\end{equation*}
where $\lambda = \min\{p,1-p\}$.
\end{enumerate}
\end{theorem}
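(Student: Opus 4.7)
The plan is the classical two-step approach to hypercontractivity on product spaces: first tensorize, reducing the $n$-variable claim to the base case $n = 1$; then verify the two-point inequality for each of the three choices of $\rho$.

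Since $\chi_S = \prod_{i \in S} \chi_i$, the operator $T_\gamma$ factors as a tensor product $T_\gamma^{(1)} \otimes \cdots \otimes T_\gamma^{(n)}$ of one-coordinate operators that multiply $\chi_i$ by $\gamma$ while fixing constants. Tensorization of $(2,q)$-hypercontractivity then proceeds by a standard induction: apply the one-variable bound in the last coordinate for each fixed setting of the other coordinates, producing the $L^2(\mu_n)$ norm raised to the $q$-th power; then, using $q/2 \ge 1$, invoke Minkowski's integral inequality to swap the outer $L^q$ and inner $L^2$ norms and close with the inductive hypothesis on the remaining $n-1$ coordinates. This collapses the problem to the base case $n = 1$.

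For $n = 1$, any $f$ has the form $a + b\chi$, so that $T_\gamma f = a + \gamma b \chi$ and $\lVert f \rVert_2^2 = a^2 + b^2$. The claim becomes
\begin{equation*}
p \bigl\lvert a + \gamma b \sqrt{\tfrac{1-p}{p}} \bigr\rvert^q + (1-p) \bigl\lvert a - \gamma b \sqrt{\tfrac{p}{1-p}} \bigr\rvert^q \le (a^2 + b^2)^{q/2}.
\end{equation*}
Case (iii) is the most transparent: one checks $\gamma \lVert \chi \rVert_\infty = 1/\sqrt{q-1}$, so $\lvert \gamma \chi \rvert$ is pointwise bounded by $1/\sqrt{q-1}$, and the inequality reduces to the standard series-comparison argument on the Taylor expansion of $(1+t)^q$, in which terms of order $k \ge 3$ are dominated by the $k=2$ term thanks to the pointwise bound. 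Cases (i) and (ii) are sharper: after normalizing $a > 0$ and substituting $u = \gamma b/a$, the inequality becomes a one-parameter analytic estimate to be verified by optimization. In case (ii), the change of variables $\alpha = -\ln(\lambda/(1-\lambda))$ is what makes the hyperbolic-sine ratios in $\rho$ emerge as the critical value at which the bound is tight.

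The main obstacle is the sharp case (ii): one must both identify the extremizer of the two-point inequality and verify that the associated $\gamma$ is precisely $\rho(q)/\sqrt{q-1}$ with $\rho(q)$ as stated. The algebra is delicate but routine: differentiate in $u$, rewrite the first-order condition using the $\alpha$ parameterization so that the $\sinh$ ratio emerges, and check the second-order condition at the critical point. Cases (i) and (iii) then follow either by direct computation or as weaker consequences of (ii).
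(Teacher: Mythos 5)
Your overall frame (tensorize via Minkowski's inequality, then verify a one-coordinate two-point inequality) is a legitimate route, but the base-case verifications, which carry essentially all the content of the theorem, are not actually carried out, and the arguments you sketch for them would not go through as stated. For case (iii), after normalizing $a=1$, $b=u$, the variable you expand is $t=\gamma u\chi$, and the pointwise bound only gives $\lvert t\rvert\le \lvert u\rvert/\sqrt{q-1}$, which is unbounded in $u$; for non-integer $q$ the series for $\lvert 1+t\rvert^{q}$ need not even converge there, and the claim that the order-$k\ge 3$ terms are dominated by the $k=2$ term fails for large $\lvert u\rvert$. Even in the small-$u$ regime the comparison must be made against the full expansion of $(1+u^{2})^{q/2}$, whose coefficients $\binom{q/2}{j}$ change sign when $q/2$ is not an integer, so tail-domination by the quadratic term is not the standard argument and does not suffice. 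A concrete sanity check: at $p=1/2$ case (iii) gives $\rho=1$, i.e.\ $\gamma=1/\sqrt{q-1}$, so your base case already contains the sharp classical $(2,q)$ two-point inequality for all real $q\ge 2$, which is never proved by such a crude series comparison (the textbook treatments either dualize to the $(q',2)$ statement with $1<q'\le 2$, where a two-term Taylor bound with controlled remainder works, or treat integer $q$ and interpolate). For case (ii), what you describe as ``delicate but routine'' optimization in $u$ is precisely the sharp two-point inequality that is the main theorem of Lata{\l}a and Oleszkiewicz: locating the critical point where the $\sinh$ ratio appears is the easy half, while proving the inequality globally in $u$, for all $\lambda$ and all real $q\ge2$, is genuinely hard and is not supplied. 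Finally, deducing (i) and (iii) ``as weaker consequences of (ii)'' requires proving that the $\rho$ of (ii) dominates those of (i) and (iii) for all $\lambda,q$; you do not prove this, and inferring it from the optimality of (ii) would be circular.

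For comparison, the paper does not attempt any of this: it cites O'Donnell for (i) and Lata{\l}a--Oleszkiewicz for (ii), and only proves (iii), by a different mechanism with no tensorization and no two-point inequality: it introduces auxiliary uniform signs $y$, uses the pointwise bound $\rho^{\lvert S\rvert}\lvert\chi_{S}(x)\rvert\le 1$ together with Parseval, and invokes Beckner's uniform-measure hypercontractivity for the function of $y$ with coefficients $\hat f(S)\nu_{S}(x)$. If you intend a self-contained proof you have set yourself a substantially harder task than the paper; otherwise, the honest move is to cite the two-point inequalities you need rather than to sketch them as routine.
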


\begin{proof}
$(i)$ See \cite[Chapter 10]{odonnell14}.
$(ii)$ See \cite{Latala2000}.
$(iii)$
Suppose $q = \infty$.
Then
\begin{equation*}
\lVert T_{0} f \rVert_{\infty}
= \lvert \ex{f} \rvert
\le \lVert f \rVert_{1}
\le \lVert f \rVert_{2}.
\end{equation*}
Suppose $q$ is finite.
Let
\begin{equation*}
\nu_{S}(x)
= \rho^{\lvert S \rvert} \chi_{S}(x)
\end{equation*}
and $m$ be the uniform measure on $\{-1,1\}^{n}$.
Then, Fubini's theorem implies that
\begin{align*}
\lVert T_{ \delta} f \rVert_{q}^{q}
&= \int \lVert T_{ \delta} f \rVert_{q}^{q} \, d m(y) \\
&= \int \int \left| \sum_{S} \left( \frac{ \rho }{\sqrt{q-1}} \right)^{\lvert S \rvert} \hat{f}(S) \chi_{S}(x) \right|^{q} \, d \mu(x) \, d m(y) \\
&= \int \int \left| \sum_{S} \left( \frac{1}{\sqrt{q-1}} \right)^{\lvert S \rvert} \hat{f}(S) \nu_{S}(x) y^{S} \right|^{q} \, d \mu(x) \, d m(y) \\
&= \int \int \left| \sum_{S} \left( \frac{1}{\sqrt{q-1}} \right)^{\lvert S \rvert} \hat{f}(S) \nu_{S}(x) y^{S} \right|^{q} \, d m(y) \, d \mu(x).
\end{align*}
By standard hypercontractivity, for the uniform measure applied to the function with uniform Fourier coefficients $\hat{f}(S) \nu_{S}(x)$, \cite{Beckner1975} and Parseval's theorem
\begin{align*}
\int \left| \sum_{S} \left( \frac{1}{\sqrt{q-1}} \right)^{\lvert S \rvert} \hat{f}(S) \nu_{S}(x) y^{S} \right|^{q} \, d m(y)
&\le \left( \sum_{ S} \hat{f}(S)^{2} \nu_{S}(x)^{2} \right)^{\frac{q}{2}} \\
&\le \left( \sum_{ S} \hat{f}(S)^{2} \right)^{\frac{q}{2}} \\
&= \lVert f \rVert_{2}^{q},
\end{align*}
where, by definition of $\rho$,
\begin{equation*}
\lvert \nu_{S}(x) \rvert
= \rho^{\lvert S \rvert} \prod_{i \in S} \lvert \chi_{i}(x) \rvert
\le \rho^{\lvert S \rvert} \max \left\{ \sqrt{ \frac{p}{1-p}}, \sqrt{ \frac{1-p}{p} } \right\}^{\lvert S \rvert}
= 1.
\end{equation*}
\end{proof}

By \cite{Latala2000}, item $(ii)$ is optimal.

\begin{corollary}
\label{corollary_p-hypercontractivity}
Let $\mu$ be the $p$-biased measure and $f : \{-1,1\}^{n} \rightarrow \mathbb{R}$, then for all $0 \le \delta \le 1$
\begin{equation*}
\lVert T_{ \rho_{1}(\delta) \delta} f \rVert_{2} \le \lVert f \rVert_{1 + \delta^{2}},
\end{equation*}
where $\rho_{1}(\delta) = \rho \left( \frac{1}{\delta^{2}} + 1 \right)$.
\end{corollary}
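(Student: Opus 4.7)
The plan is to derive this $(1+\delta^2,2)$-hypercontractive inequality from the $(2,q)$-hypercontractive inequality of Theorem \ref{theorem_p-hypercontractivity} by the standard duality trick, choosing $q$ so that the dual exponent $q' = q/(q-1)$ comes out to $1+\delta^2$.

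First I would set $q = \frac{1}{\delta^2}+1$. Then $q\ge 2$ (since $0\le\delta\le 1$) so Theorem \ref{theorem_p-hypercontractivity} applies, $\sqrt{q-1} = 1/\delta$, and $\gamma = \rho(q)/\sqrt{q-1} = \delta\,\rho(1/\delta^2+1) = \delta\,\rho_1(\delta)$. Thus for every $f$,
\begin{equation*}
\lVert T_{\delta\rho_1(\delta)} f\rVert_q \le \lVert f\rVert_2.
\end{equation*}
A direct computation also gives $q' := q/(q-1) = 1+\delta^2$, so this is exactly the exponent that appears on the right-hand side of the corollary (the boundary cases $\delta=0$ and $\delta=1$ are trivial or coincide with Theorem \ref{theorem_p-hypercontractivity} itself).

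Next I would invoke duality. The operator $T_\gamma$ is self-adjoint on $L^2(\mu)$ because the $\chi_S$ form an orthonormal basis of eigenvectors with real eigenvalues $\gamma^{|S|}$. Hence, for any $g$,
\begin{equation*}
\lVert T_{\delta\rho_1(\delta)} g\rVert_2
= \sup_{\lVert h\rVert_2\le 1} \langle T_{\delta\rho_1(\delta)} g, h\rangle
= \sup_{\lVert h\rVert_2\le 1} \langle g, T_{\delta\rho_1(\delta)} h\rangle.
\end{equation*}
Applying H\"older's inequality with the conjugate pair $(q',q) = (1+\delta^2, 1/\delta^2+1)$ followed by the displayed bound above yields
\begin{equation*}
\langle g, T_{\delta\rho_1(\delta)} h\rangle
\le \lVert g\rVert_{1+\delta^2}\,\lVert T_{\delta\rho_1(\delta)} h\rVert_q
\le \lVert g\rVert_{1+\delta^2}\,\lVert h\rVert_2,
\end{equation*}
and taking the supremum over $\lVert h\rVert_2\le 1$ gives the claim.

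There is no real obstacle; the only care needed is bookkeeping the relation $q=1/\delta^2+1$ so that $\sqrt{q-1}$, $q'$, and $\rho_1(\delta)$ all match up, and noting that self-adjointness of $T_\gamma$ on $L^2(\mu)$ holds for the $p$-biased measure (which is immediate from the Fourier diagonalisation and the orthonormality of the basis $\{\chi_S\}$ recalled in Section \ref{sec:background}).
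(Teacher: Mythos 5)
Your proof is correct and is essentially the paper's argument: both derive the corollary from Theorem \ref{theorem_p-hypercontractivity} with $q = \frac{1}{\delta^{2}}+1$ (so that the conjugate exponent is $1+\delta^{2}$ and $\gamma = \rho_{1}(\delta)\delta$), combined with H\"older's inequality and the self-adjointness of $T_{\gamma}$. The only difference is the implementation of the duality step: the paper writes $\lVert T_{\gamma} f\rVert_{2}^{2} = \langle T_{\gamma^{2}} f, f\rangle \le \lVert T_{\gamma}(T_{\gamma} f)\rVert_{q}\,\lVert f\rVert_{1+\delta^{2}} \le \lVert T_{\gamma} f\rVert_{2}\,\lVert f\rVert_{1+\delta^{2}}$ and cancels a factor of $\lVert T_{\gamma} f\rVert_{2}$, whereas you take a supremum over the $L^{2}$ unit ball; the ingredients and exponent bookkeeping are identical.
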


\begin{proof}
Since $\delta = \frac{1}{\sqrt{q-1}}$ for some $q \ge 2$, we choose $q$ to satisfy this equality.
Let $\gamma = \rho(q) \delta =\rho \left( \frac{1}{\delta^{2}} + 1 \right) \delta$.
Then by H\"{o}lder's inequality
\begin{align*}
\lVert T_{ \gamma} f \rVert_{2}^{2} 
&= \left < T_{\gamma^{2}} f, f \right> \\ 
&\le \lVert T_{\gamma^{2}} f \rVert_{q} \lVert f \rVert_{1 + \frac{1}{q-1}} \\
&= \lVert T_{\gamma} \left( T_{\gamma}  f \right) \rVert_{q}\lVert f \rVert_{1 + \frac{1}{q-1}} \\
&\le \lVert T_{\gamma}f \rVert_{2}\lVert f \rVert_{1 + \frac{1}{q-1}}.
\end{align*}
\end{proof}

\begin{lemma}
\label{lemma_expectation_spectrum}
Let $\mu$ be the $p$-biased measure, $f : \{-1,1\}^{n} \rightarrow \mathbb{R}$ and $f_{i} = f - \exi{f}$.
Then, for all $i$, we have
\begin{enumerate}[i)]
\item
\begin{equation*}
\hat{f_{i}}(S)
= 
\begin{cases}
\hat{f}(S) & \text{if } S \ni i, \\
0 & \text{otherwise};
\end{cases}
\end{equation*}
\item
\begin{equation*}
f_{i} = \frac{\sigma}{2} \left( f \circ \tau_{i}^{+} - f\circ \tau_{i}^{-} \right) \chi_{i}.
\end{equation*}
\end{enumerate}
\end{lemma}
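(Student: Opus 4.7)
For part (i), I would apply $E_i$ termwise to the Fourier expansion $f = \sum_{S} \hat f(S)\chi_S$. Since $\chi_S = \prod_{j\in S}\chi_j$ factors into a part depending on $x_i$ (only when $i\in S$) and a part that does not, and since $\chi_i = (x_i - E_i[x_i])/\sigma_i$ was designed precisely so that $E_i[\chi_i]=0$, I get $E_i[\chi_S]=0$ when $i\in S$ and $E_i[\chi_S]=\chi_S$ otherwise. Thus $E_i[f] = \sum_{S\not\ni i}\hat f(S)\chi_S$, and subtracting yields $f_i = \sum_{S\ni i}\hat f(S)\chi_S$, which is exactly the claim about the Fourier coefficients of $f_i$.

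For part (ii), I plan a direct pointwise verification, splitting on the value of $x_i$. Writing $p = p_i$ for the biased measure, the one-coordinate expectation is
\begin{equation*}
E_i[f](x) = p\,(f\circ\tau_i^+)(x) + (1-p)\,(f\circ\tau_i^-)(x),
\end{equation*}
so $f_i(x) = f(x) - p(f\circ\tau_i^+)(x) - (1-p)(f\circ\tau_i^-)(x)$. When $x_i = 1$, $f(x) = (f\circ\tau_i^+)(x)$, so $f_i(x) = (1-p)\bigl((f\circ\tau_i^+) - (f\circ\tau_i^-)\bigr)(x)$; when $x_i = -1$, $f(x) = (f\circ\tau_i^-)(x)$, so $f_i(x) = -p\bigl((f\circ\tau_i^+) - (f\circ\tau_i^-)\bigr)(x)$.

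It remains to check that $\tfrac{\sigma}{2}\chi_i$ equals these same scalars on each fiber. Since $\sigma^2 = E_i[(x_i - E_i[x_i])^2] = 4p(1-p)$, we have $\sigma/2 = \sqrt{p(1-p)}$, and then $\tfrac{\sigma}{2}\chi_i(x) = \sqrt{p(1-p)}\cdot\sqrt{(1-p)/p} = 1-p$ when $x_i=1$, and $\tfrac{\sigma}{2}\chi_i(x) = -\sqrt{p(1-p)}\cdot\sqrt{p/(1-p)} = -p$ when $x_i=-1$. Matching these to the two cases above gives the identity on all of $\{-1,1\}^n$.

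There is no real obstacle here; both parts are short verifications from the definitions. The only thing to watch is the sign convention for $\tau_i^\pm$ together with the value of $\chi_i$ on each fiber, so that the two cases in (ii) combine into a single formula.
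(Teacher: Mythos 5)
Your proof is correct. Part (i) is essentially the paper's argument: both rest on the facts that $\exi{\chi_{S}}=\chi_{S}$ for $S\not\ni i$ and $\exi{\chi_{S}}=0$ for $S\ni i$, applied termwise to the Fourier expansion. For part (ii) you take a genuinely different route: the paper stays in the Fourier basis, using the identity $\chi_{S}\circ\tau_{i}^{+}-\chi_{S}\circ\tau_{i}^{-}=\frac{2}{\sigma}\chi_{S\setminus\{i\}}$ for $S\ni i$ (and $0$ otherwise), so that multiplying by $\frac{\sigma}{2}\chi_{i}$ and summing over the expansion reproduces exactly the spectrum identified in (i); you instead verify the identity pointwise, splitting on the value of $x_{i}$, computing $\exi{f}=p\,(f\circ\tau_{i}^{+})+(1-p)\,(f\circ\tau_{i}^{-})$, and matching the resulting scalars $1-p$ and $-p$ with the values of $\frac{\sigma}{2}\chi_{i}$ on each fiber via $\sigma/2=\sqrt{p(1-p)}$. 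Your computation is right, including the sign conventions and the constant. The pointwise check is more elementary and self-contained --- it never touches the Fourier expansion for (ii) and makes the normalization $\sigma/2$ transparent --- while the paper's route derives (ii) from the same single-coordinate facts about the basis as (i) and keeps the spectral picture (that $f_{i}$ is supported exactly on the sets containing $i$) in the foreground, which is the form actually used later in the proof of Theorem \ref{theorem_real_kkl}.
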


\begin{proof}
$(i)$
\begin{equation*}
\exi{\chi_{S}}
= 
\begin{cases}
\chi_{S} & S \not \ni i \\
0 & \text{else}
\end{cases}.
\end{equation*}
$(ii)$
\begin{equation*}
\chi_{S} \circ \tau_{i}^{+} - \chi_{S} \circ \tau_{i}^{-}
= 
\begin{cases}
\frac{2}{\sigma} \chi_{S \backslash \{i\} } & S \ni i \\
0 & \text{else}
\end{cases}.
\end{equation*}
\end{proof}

\section{Proof of Theorem \ref{theorem_real_kkl}}

The function $\rho(q)$ will be used often and with varying parameterization.
In particular, when $\delta$ is used, following the notation in Corollary \ref{corollary_p-hypercontractivity}, $\rho_{1}(\delta) = \rho \left( \frac{1}{\delta^{2}} + 1 \right)$ and, when $\alpha,\alpha_{0}$ are used, as in Theorem \ref{theorem_real_kkl}, $\rho_{2}(\alpha) = \rho \left( e^{\alpha} + 1 \right)$.

If $M_{0} = \infty$, there is nothing to prove, i.e. the LHS is nonnegative.
Suppose $M_{0}$ is finite.
Fix $\alpha_{0} > 0$.
Relabel $\{f_{n}\}$ as $\{f_{k}\}$ and let $f_{n} = f_{k_{n}}$ be any subsequence.
Suppose
\begin{equation*}
\limsup_{n \rightarrow \infty} \frac{ \max_{i} \lVert f_{n}^{(i)} \rVert_{1} }{ \var(f_{n}) \frac{\ln n}{n} } < \frac{1}{ M_{0}} \frac{\tanh\left( \frac{\alpha_{0}}{2} \right) }{ \alpha_{0} - \ln \rho_{2}(\alpha_{0})^{2} }.
\end{equation*}
Then, by Lemma \ref{lemma_limsup_properties}, there exists $0 < \varepsilon < \tanh \left( \frac{ \alpha_{0}}{2} \right)$ and $N_{1} \in \mathbb{N}$ such that for all $n \ge N_{1}$ and $1 \le i \le n$
\begin{equation*}
\lVert f_{n}^{(i)} \rVert_{1} \le \frac{1}{M_{0} + \varepsilon} \frac{ \tanh\left( \frac{\alpha_{0}}{2} \right) - \varepsilon }{ \alpha_{0} - \ln \rho_{2}(\alpha_{0})^{2}} (1- \varepsilon) \var(f_{n}) \frac{\ln n }{n},
\end{equation*}
for any such $\varepsilon$ there exists $N_{2} \ge N_{1}$ such that for all $n \ge N_{2}$ and $1 \le i \le n$ such that $f_{n}^{(i)} \ne 0$
\begin{equation*}
\frac{ \lVert f_{n}^{(i)} \rVert_{2}^{2}}{ \lVert f_{n}^{(i)} \rVert_{1}}
\le M_{0} + \varepsilon
\end{equation*}
and there exists $N_{3} \ge N_{2}$ such that for all $n \ge N_{3}$
\begin{equation*}
a_{n} b > 1
\qquad
\frac{ \ln a_{n} b }{a_{n}b -1 } < \alpha_{0} - \ln \rho_{2}(\alpha_{0})^{2},
\end{equation*}
where
\begin{equation*}
a_{n} = \frac{ \tanh\left( \frac{\alpha_{0}}{2} \right) - \varepsilon }{ \alpha_{0} - \ln \rho_{2}(\alpha_{0})^{2}} (1- \varepsilon) \var(f_{n}) \ln n
\qquad
b = \frac{1}{ (1-\varepsilon) \var(f_{n})}.
\end{equation*}
Fix an $n \ge N_{3}$, $f = f_{n}$ and $f_{i} = f_{n}^{(i)}$.
Let $M = M_{0} + \varepsilon$ and
\begin{equation*}
I = \sum_{i=1}^{n} \lVert f_{i} \rVert_{2}^{2}.
\end{equation*}

By Lemma \ref{lemma_expectation_spectrum} and Parseval's theorem
\begin{equation*}
\lVert f_{i} \rVert_{2}^{2}
= \sum_{ i \ni S } \hat{f}(S)^{2}
\qquad
\Longrightarrow
\qquad
I = \sum_{S \subseteq [n] } \lvert S \rvert \hat{f}(S)^{2}.
\end{equation*}
By assumption $\lVert f_{i} \rVert_{2}^{2} \le \frac{a_{n}}{n}$ for all $i$ implying $I \le a_{n}$.
Therefore, by letting
\begin{equation*}
\nu(\mathcal{E}) = \sum_{S \subseteq \mathcal{E} } \hat{f}(S)^{2},
\end{equation*}
we see that
\begin{equation*}
\mathcal{A} = \{ S : \lvert S \rvert > a_{n} b \}
\qquad
\Longrightarrow
\qquad
\nu(\mathcal{A}) \le \frac{1}{b}.
\end{equation*}
Let $\mathcal{B} = \{S : 0 < \lvert S \rvert \le a_{n} b \}$.
Then
\begin{equation*}
\var(f) = \nu(\mathcal{B}) + \nu(\mathcal{A})
\qquad
\Longrightarrow
\qquad
\nu(\mathcal{B}) \ge  \var(f) - \frac{1}{b}.
\end{equation*}
Thus, by definition of $b$,
\begin{equation*}
\varepsilon \var(f)
= \var(f) - \frac{1}{b}
\le \nu(\mathcal{B}).
\end{equation*}

For all $0 \le \delta \le 1$, by Corollary \ref{corollary_p-hypercontractivity},
\begin{equation*}
\sum_{i \ni S} \left( \rho_{1}(\delta) \delta \right)^{ 2 \lvert S \rvert} \hat{f}(S)^{2}
= \lVert T_{ \rho_{1}(\delta) \delta} f_{i} \rVert_{2}^{2}
\le \lVert f_{i} \rVert_{1 + \delta^{2}}^{2}.
\end{equation*}
Moreover, for all $0 \le \gamma \le 1$ and $f_{i} \ne 0$, we have
\begin{align*}
\lVert f_{i} \rVert_{1 + \gamma}^{1+\gamma}
&= \sum_{x} \lvert f_{i} \rvert^{1 + \gamma} \mu(x) \\
&= \lVert f_{i} \rVert_{1} \sum_{x} \lvert f_{i} \rvert^{\gamma} \frac{ \lvert f_{i} \rvert }{ \lVert f_{i} \rVert_{1} } \mu(x) \\
&\le^{(a)} \lVert f_{i} \rVert_{1} \left( \sum_{x} \lvert f_{i} \rvert \frac{ \lvert f_{i} \rvert }{ \lVert f_{i} \rVert_{1} } \mu(x) \right)^{\gamma}\\
&= \lVert f_{i} \rVert_{1} \left( \frac{ \lVert f_{i} \rVert_{2}^{2}}{ \lVert f_{i} \rVert_{1}} \right)^{\gamma} \\
&\le \lVert f_{i} \rVert_{1} M^{\gamma},
\end{align*}
where $(a)$ is an application of Jensen's inequality.
Taking this to the $\frac{2}{1+\gamma}$ power gives
\begin{equation*}
\lVert f_{i} \rVert_{1+\gamma}^{2}
\le \lVert f_{i} \rVert_{1}^{\frac{2}{1+\gamma}} M^{\frac{2\gamma}{1+\gamma}},
\end{equation*}
where this bound holds for all $i$.
Thus, letting $\gamma = \delta^{2}$,
\begin{equation*}
\sum_{i \ni S} \left( \rho_{1}(\delta) \delta \right)^{2 \lvert S \rvert} \hat{f}(S)^{2}
\le \lVert f_{i} \rVert_{1}^{\frac{2}{1 + \delta^{2}}} M^{\frac{2 \delta^{2} }{1+\delta^{2}}}
\end{equation*}
and, letting $\delta^{2} = e^{-\alpha}$,
\begin{align*}
\sum_{S} \lvert S \rvert e^{- (\alpha - \ln \rho_{2}(\alpha)^{2}) \lvert S \rvert} \hat{f}(S)^{2}
&= \sum_{i = 1}^{n} \sum_{i \ni S} e^{- (\alpha - \ln \rho_{2}(\alpha)^{2}) \lvert S \rvert} \hat{f}(S)^{2} \\
&\le \sum_{i =1}^{n} \lVert f_{i} \rVert_{1}^{\frac{2}{1+e^{-\alpha }}} M^{\frac{2 e^{-\alpha}}{1+e^{-\alpha}}}\\
&\le n \left( \frac{1}{M} \frac{a_{n}}{n} \right)^{\frac{2}{1+e^{-\alpha }}} M^{\frac{ 2e^{-\alpha}}{1+e^{-\alpha}}}.
\end{align*}
Combining terms, with $\tanh\left( \frac{\alpha}{2} \right) = \frac{ 1 - e^{-\alpha}}{1+e^{-\alpha}}$, for all $\alpha \ge 0$,
\begin{equation*}
\sum_{S} \lvert S \rvert e^{- (\alpha - \ln \rho_{2}(\alpha)^{2}) \lvert S \rvert} \hat{f}(S)^{2}
\le a_{n}^{ 1 + \tanh\left( \frac{\alpha}{2} \right) } n^{- \tanh \left( \frac{ \alpha}{2} \right)} M^{-2 \tanh\left( \frac{\alpha}{2} \right)}.
\end{equation*}

Then, as $xe^{-\beta x}$ is increasing then decreasing in $x$, for $x,\beta \ge 0$,
\begin{align*}
\sum_{S} \lvert S \rvert e^{- (\alpha - \ln \rho_{2}(\alpha)^{2}) \lvert S \rvert} \hat{f}(S)^{2}
&\ge \sum_{S \in \mathcal{B}} \lvert S \rvert e^{- (\alpha - \ln \rho_{2}(\alpha)^{2}) \lvert S \rvert} \hat{f}(S)^{2} \\
&\ge \min \{e^{-(\alpha - \ln \rho_{2}(\alpha)^{2})},a_{n}be^{-(\alpha - \ln \rho_{2}(\alpha)^{2}) a_{n}b} \} \nu(\mathcal{B}),
\end{align*}
where, for $x > 1$,
\begin{equation*}
\min \{ e^{-\beta}, x e^{-\beta x} \}
=
\begin{cases}
e^{-\beta} & \beta \le \frac{ \ln x}{x -1} \\
xe^{-\beta x} & \beta > \frac{ \ln x}{x -1}
\end{cases}.
\end{equation*}
Therefore, letting $\alpha = \alpha_{0}$, by the conditions imposed on $a_{n},b$ with regard to $\alpha_{0}$,
\begin{equation*}
\min \{e^{-(\alpha_{0} - \ln \rho_{2}(\alpha_{0})^{2})},a_{n}be^{-(\alpha_{0} - \ln \rho_{2}(\alpha_{0})^{2}) a_{n}b} \}
=a_{n}be^{-(\alpha_{0} - \ln \rho_{2}(\alpha_{0})^{2}) a_{n}b}.
\end{equation*}
Combining with the bounds of previous paragraphs
\begin{align*}
\varepsilon \var(f)
&\le \nu(\mathcal{B}) \\
&\le \frac{1}{a_{n} b} e^{(\alpha_{0} - \ln \rho_{2}(\alpha_{0})^{2}) a_{n}b} \sum_{S} \lvert S \rvert e^{- (\alpha_{0} - \ln \rho_{2}(\alpha_{0})^{2}) \lvert S \rvert} \hat{f}(S)^{2} \\
&\le \frac{1}{a_{n} b} e^{(\alpha_{0} - \ln \rho_{2}(\alpha_{0})^{2}) a_{n}b} a_{n}^{ 1 + \tanh\left( \frac{\alpha_{0}}{2} \right) } n^{- \tanh \left( \frac{ \alpha_{0}}{2} \right)} M^{-2 \tanh\left( \frac{\alpha_{0}}{2} \right)}.
\end{align*}
Thus, multiplying both sides by $b$ and taking a $\ln$,
\begin{align*}
\ln \frac{\varepsilon}{1-\varepsilon}
&\le (\alpha_{0} - \ln \rho_{2}(\alpha_{0})^{2}) a_{n}b + \tanh \left( \frac{\alpha_{0}}{2} \right) \ln a_{n} - \tanh \left( \frac{ \alpha_{0}}{2} \right)  \ln n \\
&\quad - 2\tanh\left( \frac{ \alpha_{0}}{2} \right) \ln M,
\end{align*}
where, by definition of $a_{n}$ and $b$,
\begin{equation*}
(\alpha_{0} - \ln \rho_{2}(\alpha_{0})^{2}) a_{n}b
= \left( \tanh\left( \frac{\alpha_{0}}{2} \right) - \varepsilon \right) \ln n
\end{equation*}
and
\begin{equation*}
\ln a_{n}
=  \ln \left(\frac{ \tanh\left( \frac{\alpha_{0}}{2} \right) - \varepsilon }{ \alpha_{0} - \ln \rho_{2}(\alpha_{0})^{2}} (1- \varepsilon) \var(f) \ln n \right)
= \ln \ln n + \ln o(n^{\varepsilon}) + O(1),
\end{equation*}
where $\var(f) = o(n^{\varepsilon})$ by assumption.
Combining
\begin{equation*}
\ln \frac{\varepsilon}{1-\varepsilon}
\le - \varepsilon \ln n + \ln \ln n + \ln o(n^{\varepsilon}) + O(1).
\end{equation*}
Hence
\begin{equation*}
\frac{ \varepsilon }{1- \varepsilon}
= o(1),
\end{equation*}
a contradiction.
Thus
\begin{equation*}
\limsup_{n \rightarrow \infty} \frac{ \max_{i} \lVert f_{n}^{(i)} \rVert_{1} }{ \var(f_{n}) \frac{\ln n}{n} } \ge \frac{1}{ M_{0}} \frac{\tanh\left( \frac{\alpha_{0}}{2} \right) }{ \alpha_{0} - \ln \rho_{2}(\alpha_{0})^{2} }
\end{equation*}
and, as $\{f_{n}\}$ was an arbitrary subsequence, this extends to the $\liminf$ by Lemma \ref{lemma_limsup_properties}.
As $\alpha_{0}$ was arbitrary, the result follows.
\qed

\begin{lemma}
\label{lemma_limsup_properties}
For any sequence $\{x_{n} \in \mathbb{R}\}$, the following statements hold.
\begin{enumerate}[i)]
\item
For all $\varepsilon >0$, there exists $N \in \mathbb{N}$ such that $n \ge N$ implies $x_{n} \le \limsup_{n \rightarrow \infty} x_{n} + \varepsilon$;
\item
If $\limsup_{n \rightarrow \infty} x_{n} < c$, then for any decreasing continuous function $f : (0,b] \rightarrow [0,\infty)$ such that $f(b) = 0$ and $\lim_{x \downarrow 0} f(x) = c$  there exists $0< \varepsilon < b$ and $N \in \mathbb{N}$ such that $n \ge N$ implies $x_{n} \le f(\varepsilon)$;
\item
If, for all subsequences $\{x_{n_{k}}\}$, $\limsup_{k \rightarrow \infty} x_{n_{k}} \ge c$, then $\liminf_{n \rightarrow \infty} x_{n} \ge c$.
\end{enumerate}
\end{lemma}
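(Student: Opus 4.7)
\section*{Proof proposal for Lemma \ref{lemma_limsup_properties}}

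The plan is to dispatch the three parts in order, using only elementary facts about $\limsup$ and $\liminf$ of real sequences; (ii) leans on (i) together with the continuity hypothesis on $f$, and (iii) is a one-line contradiction argument.

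For (i), I would unfold the definition $\limsup_{n\to\infty} x_n = \lim_{N\to\infty} \sup_{n\ge N} x_n$. Write $L = \limsup_n x_n$. Then the sequence $s_N = \sup_{n\ge N} x_n$ is nonincreasing and converges to $L$, so for any $\varepsilon>0$ there is $N$ with $s_N \le L+\varepsilon$, which gives $x_n \le s_N \le L+\varepsilon$ for every $n\ge N$. (The edge cases $L=\pm\infty$ are trivial: the conclusion is vacuous when $L=+\infty$, and is not used elsewhere in the paper when $L=-\infty$.)

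For (ii), let $L = \limsup_n x_n < c$ and pick any $c'$ with $L < c' < c$. By (i) applied with $\varepsilon = c' - L$ we obtain $N$ such that $x_n \le c'$ for all $n\ge N$. Now use the two conditions on $f$: since $f$ is continuous on $(0,b]$ and $\lim_{x\downarrow 0} f(x) = c > c'$, there is some $\varepsilon_0 > 0$ for which $f(\varepsilon) > c'$ whenever $0 < \varepsilon < \varepsilon_0$; concretely, choose $\varepsilon \in (0,\min\{\varepsilon_0, b\})$. Combining, $x_n \le c' \le f(\varepsilon)$ for $n\ge N$.

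For (iii), I would argue by contradiction. Suppose $\liminf_n x_n < c$. By the standard characterization of $\liminf$ as the smallest subsequential limit, there exists a subsequence $x_{n_k}$ converging to $\liminf_n x_n$; in particular $\limsup_k x_{n_k} = \lim_k x_{n_k} = \liminf_n x_n < c$, which contradicts the hypothesis that every subsequence has $\limsup \ge c$. Hence $\liminf_n x_n \ge c$.

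The only part with any subtlety is (ii), where the mild obstacle is making sure that the strict inequality $L < c$ is genuinely converted into a strict inequality $c' < c = \lim_{x\downarrow 0} f(x)$ so that continuity provides an admissible $\varepsilon$; inserting the auxiliary constant $c'$ handles this cleanly. Parts (i) and (iii) are essentially bookkeeping on top of the definitions.
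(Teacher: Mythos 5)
Your proof is correct and follows essentially the same elementary route as the paper: (i) from the definition of $\limsup$, (ii) by converting the strict gap below $c$ into an eventual bound and then invoking the behavior of $f$ near $0$, and (iii) via subsequences. The only differences are cosmetic — you absorb the $c=\infty$ case of (ii) through the intermediate constant $c'$ where the paper shifts $f$ to reduce to finite $c$, and you phrase (iii) as a contradiction where the paper argues directly with $L+\varepsilon \ge c$ for all $\varepsilon>0$.
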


\begin{proof}
(ii)
If $c = \infty$, as the inequality is strict, there is a finite positive $c_{1}$ that satisfies the inequality.
Moreover, $f(a) = c_{1}$ for some $0 < a < b$ by decreasing and continuity, and $g(x) = f(x+a)$ on $(0,b-a]$ satisfies the conditions.
Thus WLOG suppose $c$ is finite.
There exists $\delta > 0$ such that $\limsup_{n \rightarrow \infty} x_{n} \le c - \delta$.
Moreover, for all $0 < \gamma < \delta$ there exists $N \in \mathbb{N}$ such that $n \ge N$ implies $x_{n} \le c - \delta + \gamma < c$.
Choose by decreasing and continuity $\varepsilon$ such that $c - \delta + \gamma \le f(\varepsilon)$.
(iii)
Let $L = \liminf_{n \rightarrow \infty} x_{n}$ and $\varepsilon > 0$.
Then there exists a subsequence $\{x_{n_{k}}\}$ such that
\begin{equation*}
L + \varepsilon > x_{n_{k}}
\end{equation*}
for all $k$.
Thus
\begin{equation*}
L + \varepsilon \ge \limsup_{k \rightarrow \infty} x_{n_{k}} \ge c.
\end{equation*}
\end{proof}

\section{Large derivatives}

We use a result of Rossignol to relate the derivative of the expectation to the sum of $L^1$ influences.

\begin{lemma}{\rm \cite{Rossignol2006} }
Let $\mu$ be the $p$-biased measure and $f : \{-1,1\}^{n} \rightarrow \mathbb{R}$, then
\begin{equation*}
\frac{d}{dp} \ex{f}
= \sum_{i=1}^{n} E \left[ f \circ \tau_{i}^{+} - f \circ \tau_{i}^{-} \right].
\end{equation*}
\end{lemma}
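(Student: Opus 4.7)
The plan is to differentiate $\ex{f} = \sum_{x} f(x) \mu(x)$ directly, exploiting the product structure $\mu = \mu_{1} \otimes \cdots \otimes \mu_{n}$. Because the domain is finite, the derivative passes through the sum with no analytic subtlety, so the task reduces to computing $\frac{d \mu(x)}{dp}$ and then simplifying the resulting expression.

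First I would apply logarithmic differentiation to $\mu(x) = \prod_{i} \mu_{i}(x_{i})$. Each factor satisfies $\frac{d}{dp} \ln \mu_{i}(x_{i}) = 1/p$ when $x_{i} = 1$ and $-1/(1-p)$ when $x_{i} = -1$, so summing gives
\begin{equation*}
\frac{d \mu(x)}{dp} = \mu(x) \sum_{i=1}^{n} \psi_{i}(x_{i}),
\end{equation*}
where $\psi_{i}(1) = 1/p$ and $\psi_{i}(-1) = -1/(1-p)$. Substituting and exchanging the finite sums yields
\begin{equation*}
\frac{d}{dp} \ex{f} = \sum_{i=1}^{n} \ex{f \, \psi_{i}}.
\end{equation*}

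The one calculation that matters is the telescoping that happens when one integrates out the $i$-th coordinate first: the weight $p$ (resp.\ $1-p$) cancels $1/p$ (resp.\ $1/(1-p)$), leaving
\begin{equation*}
\exi{f \, \psi_{i}} = p \cdot \tfrac{1}{p} \, f \circ \tau_{i}^{+} - (1-p) \cdot \tfrac{1}{1-p} \, f \circ \tau_{i}^{-} = f \circ \tau_{i}^{+} - f \circ \tau_{i}^{-}.
\end{equation*}
Since $f \circ \tau_{i}^{\pm}$ is independent of $x_{i}$, applying the remaining integrations and summing over $i$ yields the stated identity. There is no real obstacle; the proof is essentially bookkeeping. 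Equivalently, one can induct on $n$ from the base identity $\ex{f} = p \, \ex{f \circ \tau_{i}^{+}} + (1-p) \, \ex{f \circ \tau_{i}^{-}}$ (obtained by integrating out any single variable) and differentiating, but the logarithmic-derivative route is the most direct.
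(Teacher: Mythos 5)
Your proof is correct and follows essentially the same route as the paper: both differentiate the product measure term by term (your logarithmic-derivative factor satisfies $\mu(x)\psi_i(x_i) = x_i\prod_{j\ne i}\mu_j(x)$, which is exactly the paper's computation $\tfrac{d}{dp}\mu_i(x)=x_i$) and then integrate out the $i$-th coordinate to get the telescoped difference $f\circ\tau_i^+ - f\circ\tau_i^-$. If anything, you carry out the final substitution and coordinate-wise integration that the paper's sketch leaves implicit.
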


\begin{proof}
Let
\begin{equation*}
\mu(x) = \prod_{i=1}^{n} \mu_{i}(x),
\end{equation*}
where
\begin{equation*}
\mu_{i}(x) = \frac{2p-1}{2}(x_{i}+1) + 1 - p.
\end{equation*}
Then
\begin{equation*}
\frac{d}{dp} \mu(x) 
= \sum_{i=1}^{n} \left( \frac{d}{dp} \mu_{i}(x) \right) \prod_{j \ne i } \mu_{j}(x) 
= \sum_{i=1}^{n} x_{i} \prod_{j \ne i } \mu_{j}(x).
\end{equation*}
\end{proof}

For any $n$, an element $\pi$ of the symmetric group $S_{n}$ acts on $x \in \{-1,1\}^{n}$ by $[\pi(x)]_{i} = x_{\pi(i)}$.
Given a real-valued boolean function $f$ its \emph{symmetry group} $\mathcal{G}$ is the following subset of $S_{n}$
\begin{equation*}
\mathcal{G} := \{ \pi \in S_{n} : f \circ \pi = f \}.
\end{equation*}
The function $f$ is said to be \emph{symmetric} if $\mathcal{G}$ is transitive, i.e. for all $i,j \in \{1,\ldots, n\}$ there exists $\pi \in \mathcal{G}$ such that $\pi(i) = j$.
The boolean domain is endowed with the standard partial order, $x \preceq y$ if $x_{i} \le y_{i}$ for all $i$.
We call a function \emph{monotone} (or more precisely monotone increasing) if $x \preceq y$ implies $f(x) \le f(y)$.

Combining with Lemma \ref{lemma_expectation_spectrum}, for any monotone function
\begin{equation*}
\frac{d}{dp} \ex{f}
\ge 2\sum_{i=1}^{n} \lVert f - \exi{f} \rVert_{1}.
\end{equation*}
If $f$ is symmetric then the sum on the RHS is constant in $i$, i.e. given $i$ and $j$ choose $\pi \in \mathcal{G}$ with $\pi(i) = j$ then
\begin{equation*}
f \circ \tau_{i}^{\pm}
= (f \circ \pi) \circ \tau_{i}^{\pm}
= (f \circ \tau_{j}^{\pm}) \circ \pi
= f \circ \tau_{j}^{\pm}.
\end{equation*}
Therefore, for a sequence of non-constant monotone symmetric functions all bounded by $b$, by Equation \ref{equation_real_kkl_bounded}, eventually
\begin{equation*}
\frac{d}{dp} \ex{f_{n}}
\ge \frac{9}{10b} \frac{1}{1 + \left \lvert \ln \frac{p}{1-p} \right \rvert } \var(f_{n}) \ln n.
\end{equation*}

\subsection{Conditions implying monotonicity}

Let $\mu = \mu_{1} \otimes \ldots \otimes \mu_{n}$ where each $\mu_{i}$ is the measure with $\exi{x_{i}} = 2p_{i} - 1$ and
\begin{equation*}
g(\underline{p}) = \ex{f}
\end{equation*}
for some $f : \{-1,1\}^{n} \rightarrow \mathbb{R}$.

For $n = 1$
\begin{equation*}
g(p) = (1 - p) f(-1) + p f(1)
\qquad
g'(p) = f(1) - f(-1).
\end{equation*}
Thus $g \ge 0$ if and only if $f \ge 0$ and $g' \ge 0$ if and only if $f$ is monotone.

For the general case $g \ge 0 $ if and only if $f \ge 0$, evaluate $g$ at e.g. $\underline{1}$ for $f(\underline{1})$.
Moreover,
\begin{equation*}
\exi{f}
= (1-p_{i}) f \circ \tau_{i}^{-} + p_{i} f \circ \tau_{i}^{+}
= f \circ \tau_{i}^{-} + p_{i}( f \circ \tau_{i}^{+} - f \circ \tau_{i}^{-}),
\end{equation*}
where $f \circ \tau_{i}^{\pm}$ are independent of $x_{i}$.
Thus
\begin{equation*}
\ex{f} = \ex{f \circ \tau_{i}^{-} } + p_{i} \ex{ f \circ \tau_{i}^{+} - f \circ \tau_{i}^{-} }
\end{equation*}
and
\begin{equation*}
\frac{ \partial}{ \partial_{i} p_{i}} g(\underline{p})
= \ex{f \circ \tau_{i}^{+} - f \circ \tau_{i}^{-}}.
\end{equation*}
Hence $f$ is monotone if and only if, for all $i$, $\frac{ \partial}{\partial_{i}} g(\underline{p}) \ge 0$.

\subsection{Weak conditions}

The combination of full monotonicity and symmetry is a rather strong condition.
It suffices to show that there exists $\gamma > 0$ such that eventually
\begin{equation*}
\sum_{i=1}^{n} \ex{ f_{n} \circ \tau_{i}^{+} - f_{n} \circ \tau_{i}^{-} }
\ge \gamma n \max_{i} \lVert f_{n} \circ \tau_{i}^{+} - f_{n} \circ \tau_{i}^{-} \rVert_{1}.
\end{equation*}

This can be decomposed into a weak monotonicity and weak symmetry condition as follows.
A real-valued boolean function $f : \{-1,1\}^{n} \rightarrow \mathbb{R}$ is \emph{weakly monotone} if there exists $\alpha > 0$ such that
\begin{equation*}
\sum_{i=1}^{n} \ex{f \circ \tau_{i}^{+} - f \circ \tau_{i}^{-} }
\ge \alpha \sum_{i=1}^{n} \lVert f \circ \tau_{i}^{+} - f \circ \tau_{i}^{-} \rVert_{1}.
\end{equation*}
Similarly, a real-valued boolean function $f : \{-1,1\}^{n} \rightarrow \mathbb{R}$ is \emph{weakly symmetric} if there exists $\beta > 0$ such that
\begin{equation*}
\sum_{i=1}^{n} \lVert f \circ \tau_{i}^{+} - f \circ \tau_{i}^{-} \rVert_{1}
\ge \beta n \max_{i} \lVert f \circ \tau_{i}^{+} - f \circ \tau_{i}^{-} \rVert_{1}.
\end{equation*}
A sequence of functions $\{f_{n}\}$ is \emph{weakly monotone (symmetric)} if $f_{n}$ is eventually weakly monotone (symmetric) for some fixed $\alpha$ ($\beta$).
It should be noted that $\ex{\cdot}$, $\lVert \cdot \lVert_{1}$ are implicit functions of $p$.

\section{Tribes}

Recall the tribes \cite{Ben-Or1990} boolean function defined as the logical OR of a collection of disjoint logical ANDs.
This function is known to demonstrate the tightness of the original KKL.
As Theorem \ref{theorem_real_kkl} includes boolean functions, this applies similarly, and we provided the explicit calculations for completeness.

Consider the tribes boolean function with $\frac{n}{\ell}$ equal-sized tribes of size $\ell$.
Then, the influences satisfy
\begin{equation*}
I_{i} = 2^{-(\ell - 1) }\left(1- 2^{-\ell} \right)^{\frac{n}{\ell}-1}
\end{equation*}
for all $i$ with variance
\begin{equation*}
4\left(1- 2^{-\ell} \right)^{\frac{n}{\ell}}\left(1- \left(1- 2^{-\ell} \right)^{\frac{n}{\ell}} \right).
\end{equation*}
Thus, the ratio of influence to variance is
\begin{equation*}
2^{-(\ell -1)} \frac{1}{ 4 \left(1- 2^{-\ell} \right) \left(1- \left(1- 2^{-\ell} \right)^{\frac{n}{\ell}} \right) }.
\end{equation*}

Let $n = m 2^{m}$ and $f_{n}$ be the boolean function defined by uniform tribes of size $\ell = m$.
Then
\begin{equation*}
I_{i}(f_{n})
= 2^{-(m - 1) }\left(1- \frac{1}{2^{m}} \right)^{2^{m}-1},
\end{equation*}
where
\begin{equation*}
\left(1- \frac{1}{2^{m}} \right)^{2^{m}}
\rightarrow \frac{1}{e}
\end{equation*}
and
\begin{equation*}
\frac{ 2^{-m}}{\frac{ \log n }{n}}
=\frac{ 2^{-m}}{\frac{ m + \log m }{ m 2^{m}}}
= \frac{m}{m+\log m }
= 1 + o(1).
\end{equation*}
Thus, we have
\begin{align*}
\liminf_{n \rightarrow \infty} \frac{I_{i}(f_{n})}{ \var(f_{n}) \frac{ \ln n }{n} }
&\le \lim_{m \rightarrow \infty} \frac{ 2 \log e}{ 4 \left( 1- \frac{1}{e} \right) + o(1) }  (1+ o(1)) \\
&= \frac{1}{2} \frac{ \log e }{ 1 - \frac{1}{e} }.
\end{align*}

Let $n = m2^{m+k}$, for a fixed $k$, and $f_{n}$ be the boolean function defined by uniform tribes of size $\ell = m$.
Then
\begin{equation*}
I_{i}(f_{n})
= 2^{-(m - 1) }\left(1- \frac{1}{2^{m}} \right)^{2^{m+k}-1},
\end{equation*}
where
\begin{equation*}
\left(1- \frac{1}{2^{m}} \right)^{2^{m+k}}
\rightarrow e^{-2^{k}}
\end{equation*}
and
\begin{equation*}
\frac{ 2^{-m}}{\frac{ \log n }{n}}
=\frac{ 2^{-m}}{\frac{ m+k + \log m }{ m 2^{m + k }}}
= 2^{k} \frac{m}{m+k+\log m }
= 2^{k}(1 + o(1)).
\end{equation*}
Thus
\begin{equation*}
\liminf_{n \rightarrow \infty} \frac{I_{i}(f_{n})}{ \var(f_{n}) \frac{ \ln n }{n} }
\le \frac{1}{2} 2^{k} \frac{ \log e }{ 1 - e^{-2^{k}}}.
\end{equation*}
Letting $k$ tend to $-\infty$, for all $\varepsilon > 0$, there exists $f_{n} : \{-1,1\}^{n} \rightarrow \{-1,1\}$ such that
\begin{equation*}
\liminf_{n \rightarrow \infty} \frac{I_{i}(f_{n})}{ \var(f_{n}) \frac{ \ln n }{n} }
\le \frac{1}{2} \log e + \varepsilon.
\end{equation*}

\end{document}